\documentclass{article}
\usepackage{graphics}
\usepackage{amssymb}
\usepackage{latexsym}
\usepackage{rotating}
\usepackage{bibunits}
\usepackage{wrapfig}
\usepackage{psfrag}

\usepackage{a4wide}
\usepackage{amsmath}
\usepackage{url}
\usepackage{epsfig}
\usepackage{mathrsfs}
            
\setlength{\oddsidemargin}{0in}
\setlength{\evensidemargin}{0in}
\setlength{\topmargin}{0in}
\setlength{\headheight}{0in}
\setlength{\headsep}{0in}
\setlength{\textwidth}{6.5in}
\setlength{\textheight}{9in}

\newenvironment{proof}{\noindent {\bf Proof:}}{\hfill$\Box$}
\newtheorem{theorem}{Theorem}[section]
\newtheorem{lemma}[theorem]{Lemma}

\newtheorem{observation}[theorem]{Observation}

\newtheorem{definition}[theorem]{Definition}

\newcommand{\ignore}[1]{}
             

\newcommand{\hcm}[1][1]{\hspace*{#1 cm}}


\newcommand{\paren}[1]{\left( #1 \right)}

\newcommand{\ceil}[1]{\lceil #1 \rceil}


\newcommand{\Nesetril}{Ne{\u{s}}et{\u{r}}il}

\newcommand{\Ex}{\mbox{\rm Ex}}
\newcommand{\mup}{\bar{m}}
\newcommand{\mdown}{\tilde{m}}
\newcommand{\Seq}{\mathscr{S}}
\newcommand{\Rec}{\mathscr{R}}
\newcommand{\RecDeque}{\mathscr{D}}
\newcommand{\subseq}{\,\prec\,}
\newcommand{\nsubseq}{\,\nprec\,}
\newcommand{\subseqe}{\,\bar{\subseq}\,}
\newcommand{\nsubseqe}{\,\bar{\nsubseq}\,}

\title{Splay Trees, Davenport-Schinzel Sequences,\\
and the Deque Conjecture}

\author{Seth Pettie\\ The University of Michigan}

\date{}

\begin{document}
\maketitle
\begin{abstract}
We introduce a new technique to bound the asymptotic performance of splay trees.
The basic idea is to transcribe, in an indirect fashion, the rotations performed by the splay tree
as a Davenport-Schinzel sequence $\Seq$, none of whose subsequences are isomorphic to
fixed {\em forbidden subsequence}.
We direct this technique towards Tarjan's {\em deque conjecture} and prove that $n$ deque operations
require $O(n\alpha^*(n))$ time, where $\alpha^*(n)$ is the minimum number of applications of the inverse-Ackermann
function mapping $n$ to a constant.  We are optimistic that this approach could be directed towards other
open conjectures on splay trees such as the {\em traversal} and {\em split} conjectures.
\end{abstract}

\section{Introduction}

Sleator and Tarjan proposed the {\em splay tree} \cite{ST85} as a self-adjusting alternative to traditional search trees like red-black trees and AVL-trees.
Rather than enforce some type of balance invariant, the splay tree simply adjusts its structure in response to the access pattern by rotating accessed
elements towards the root in a prescribed way; see Figure~\ref{fig:zigzig-zigzag}.  
By letting the access pattern influence its own shape, the splay tree can inadvertently learn to perform
optimally on a variety of access patterns.  For example, the {\em static optimality} theorem states that splay trees are no worse than any fixed search tree.
The {\em working set}, and {\em dynamic finger} theorems show that the access time is logarithmic in the distance to the accessed 
element, where {\em distance} is either temporal (working set \cite{ST85}) or with respect to key-space (dynamic finger \cite{ColeEtal00,Cole00}).
Sleator and Tarjan went a step further and conjectured that splay trees are, to within a constant factor, just as efficient as any dynamic binary search
tree, even one that knows the whole access sequence in advance.  Despite a fair amount of attention over the years, this {\em dynamic optimality} conjecture
is still open.  In fact, there is currently no non-trivial (i.e., sub-logarithmic) bound on the competitiveness of splay trees.
The difficulty of this problem stems from the fact that splay trees were deliberately designed not to ``play along.''  They do not adhere to any notion
of {\em good structure} that we might have and, more to the point, there is no reason to believe that splay trees mimic the behavior of
the optimal dynamic search tree.

\begin{figure}[h!]
\begin{center}
\scalebox{.5}{\includegraphics{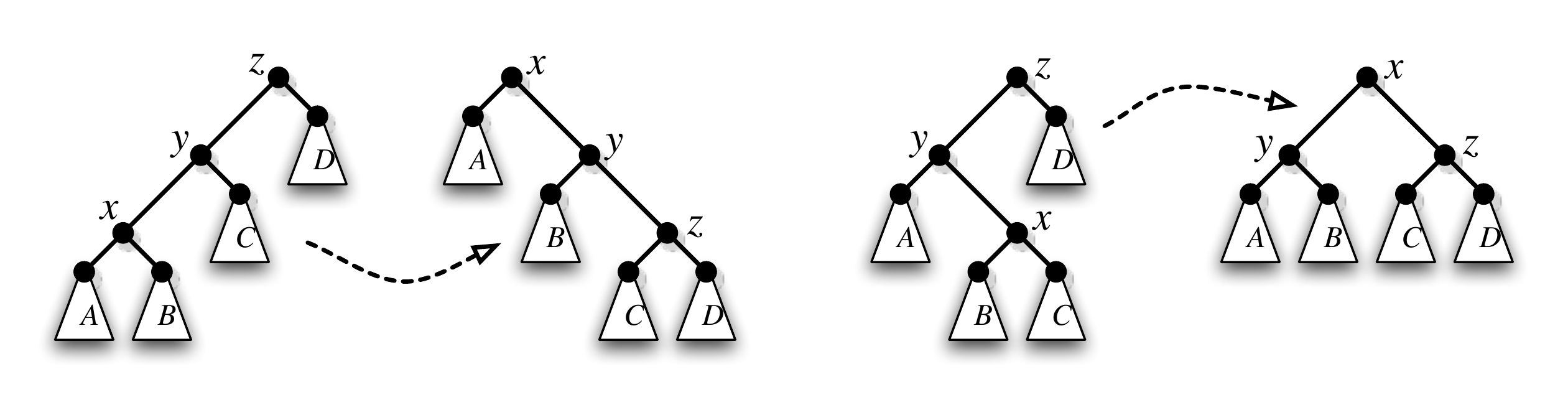}}
\end{center}
\caption{\label{fig:zigzig-zigzag}Splay trees' restructuring heuristics.  After accessing an element $x$ the tree rotates it to the root
position by repeatedly applying  a {\em zig-zig}, {\em zig-zag}, or {\em zig} as appropriate.
On the left is the zig-zig case: $x$ and its parent $y$ are both left children (or both right children); the edges
$(y,z)$ and $(x,y)$ are rotated in that order.  On the right is the zig-zag case; the edges $(x,y)$ and $(x,z)$ are
rotated in that order.  Not depicted is the zig case, when $y$ is the tree root and the edge $(x,y)$ is rotated.}
\end{figure}

The renewed interest in the dynamic optimality question is largely due to
Demaine et al.'s invention of {\em tango trees} \cite{DHIP04}.  
By appealing to the {\em interleave} lower bound of Wilbur \cite{Wilber89}
they show that tango trees are $O(\log\log n)$-competitive.  Tango trees make use of red-black trees
but it is easy to see that just about any standard binary search tree could be used as a black box in its place.
Wang et al.~\cite{WDS06} (see also \cite{Georg05}) showed that if splay trees are employed
instead of red-black trees it is possible to have $O(\log\log n)$-competitiveness and retain some properties
of splay trees, such as $O(\log n)$ amortized time per access and sequential access in linear time.  
Wang et al.~also extended their data structure to handle insertions and deletions.  

If one's immediate goal is to prove that splay trees are simply $o(\log n)$-competitive, it suffices to show that
they run in $o(n\log n)$ time on any class of access sequences for which the optimal binary search tree runs in $O(n)$ time.
There is currently no ``theory'' of access sequences whose inherent complexity is linear.  It is, therefore, not too surprising that
all the major open conjectures on splay trees (corollaries of dynamic optimality)
concern sequences whose optimal complexity is linear.  
Whether one's goal is modest or ambitious, i.e., proving sub-logarithmic competitiveness or the full dynamic optimality conjecture,
the first step must be to understand how splay trees behave on very easy access sequences.
We restate below three unresolved conjectures on the optimality of splay trees \cite{Tar85,ST85,Lucas91}.

\begin{description}
\item[Deque Conjecture] Tarjan \cite{Tar85} conjectured that all double-ended queue operations\footnote{Also called a {\em deque}, pronounced ``deck.''} 
(push, pop, and their symmetric counterparts
inject and eject) take $O(1)$ amortized time if implemented with a splay tree.  A push makes the root of the splay tree the right child of a new vertex
and a pop splays the leftmost leaf to the root position and deletes it.  Inject and eject are symmetric.

\item[Traversal Conjecture] Sleator and Tarjan \cite{ST85} conjectured that for two binary search trees $S$ and $T$ (defined on the same node set)
accessing the elements in $T$ by their preorder number in $S$ takes linear time.

\item[Split Conjecture] Lucas conjectured \cite{Lucas91} 
that any sequence of {\em splittings} in a splay tree takes linear time.  A split at $x$ consists of splaying
$x$ to the root and deleting it, leaving two splay trees, each subject to more splittings.
\end{description}

Sundar \cite{Sundar92} established a bound of $O(n\alpha(n))$ on the time required to perform $n$ deque operations, 
where $\alpha$ is the inverse-Ackermann function.  
Lucas \cite{Lucas91} showed that when the initial splay tree is a path (each node a left child), 
$n$ split operations take $O(n\alpha(n))$ time.  Notice that the split conjecture subsumes
a special case of the deque conjecture, where only pops and ejects are allowed.  We are aware of no published 
work concerning the traversal conjecture.

\paragraph{Our Contributions.}
We introduce a new technique in the analysis of splay trees that is fundamentally
different from all previous work on the subject \cite{ST85,Tar85,Sundar92,ColeEtal00,Cole00,Georg04,Elmasry04b}.
The idea is to bound the time taken to perform a sequence of accesses by 
{\em transcribing} the rotations performed by the splay tree into a Davenport-Schinzel sequence $\Seq$, i.e.,
one avoiding a specific forbidden subsequence.  We apply this technique to the deque problem
and show that $n$ deque operations take $O(n\alpha^*(n))$ time, 
where $\alpha^*(n)$ is the number of applications of the inverse-Ackermann
function mapping $n$ down to a constant.  This time bound is established by generating 
not one sequence $\Seq$ but a hierarchy of sequences, each of which avoids subsequences isomorphic
to $abababa$.  Nearly tight bounds on the length of such sequences were given by 
Agarwal, Sharir, and Shor~\cite{ASS89}.  
We believe that a generalized version of this technique should be useful
in resolving other open conjectures on splay trees.  For instance, a particularly clean
way to prove the deque, split, or traversal conjectures would be to transcribe their
rotations as a generalized Davenport-Schinzel sequence with length $O(n)$.
Klazar and Valtr~\cite{KV94} showed that a large family of forbidden subsequences have
a linear extremal function.

\paragraph{Related Work.}
Iacono \cite{Iacono05} defined a weaker notion of dynamic optimality called {\em key independent} optimality.
One assumes that keys are assigned to elements randomly.  The optimal cost of a sequence of operations
is the expected optimal cost over all key assignments.  Iacono showed that any data structure having the
working set property is also optimal in the key independent sense.  
Blum et al.~\cite{BCK03} defined another weaker notion of dynamic optimality called dynamic {\em search} optimality.
In this cost model the (online) search tree can perform any number of rotations for free after each access, i.e.,
it only pays for actually doing searches.
Georgakopoulos ~\cite{Georg04}
showed that splay trees are competitive against a large class of dynamic {\em balanced} binary search trees,
which can be ``self adjusting'' in the sense that they change shape in preparation for future searches.
Iacono defined a {\em unified property} for search structures that subsumes the working set and dynamic finger
properties.  In his data structure~\cite{Iacono01} the access time for an element is logarithmic in its distance, 
where ``distance'' is a natural combination of temporal distance plus key-space distance.  Iacono's data structure \cite{Iacono01}
is not a binary search tree and it is currently open whether {\em any} offline binary search tree has the unified property.
In other words, it is not known to be a corollary of dynamic optimality.

Just after the invention of splay trees \cite{ST85}, Fredman et al.~\cite{F+86} invented the pairing heap
as a kind of self-adjusting version of Fibonacci heaps.  There is no obvious (and still interesting) analogue of dynamic optimality
for priority queues, though Iacono~\cite{Iac00} did show that pairing heaps possess an analogue of the working set property.
See Fredman~\cite{F99} and Pettie~\cite{Pet05a} for the best lower and upper bounds on the performance of pairing heaps.

List maintenance and caching algorithms (such as move-to-front or least-recently-used \cite{ST85b}) 
are sometimes described as being self adjusting heuristics.  In these problems the (asymptotic) dynamic optimality
questions are pretty well understood \cite{ST85b}, though the leading constants have not yet been pinned down~\cite{Albers98,ASW95}.

\paragraph{Organization.} In Section~\ref{sect:equiv} we describe a known reduction \cite{Tar85,Sundar92,Lucas91} 
from the deque problem to a restrictive system of path compressions.  In Section~\ref{sect:not} we define
some notation for Davenport-Schinzel sequences, path compressions, and slowly growing functions.
Section~\ref{sect:pc} introduces a recurrence relation for a type of path compression system and shows how
it can be analyzed easily using bounds on Davenport-Schinzel sequences.  Section~\ref{sect:deque}
gives the proof that $n$ deque operations take $O(n\alpha^*(n))$ time.

\section{Deque Operations and Path Compression Schemes}\label{sect:equiv}

The relationship between deque operations on a splay tree and halving path compressions
on an arbitrary tree was noted by Lucas~\cite{Lucas90,Lucas91} and implicitly in \cite{Tar85,Sundar92}.  
Let us briefly
go through the steps of the reduction.  At the beginning of a phase we divide the $k$-node splay tree
into left and right halves of size $k/2$.  The phase ends when one half has been deleted due to pops
and ejects.   We ignore the right half for now and look at the binary tree induced by the left half; call it $L$ and its root $r$.
The root of this tree may or may not correspond to the root of the whole splay tree.  In any case, we imagine
rotating the nodes on the right spine across the root until the root has only one (left) child; call this tree $L'$.
Finally, we transform the binary tree $L'$ into a general tree $L''$ as follows.  The left
child of a vertex in $L'$ corresponds to its leftmost child in $L''$ and its right child in $L'$ corresponds
with its right sibling in $L''$.  See Figure~\ref{fig:equiv}.
\begin{figure}[h!]
\begin{center}
\scalebox{.4}{\includegraphics{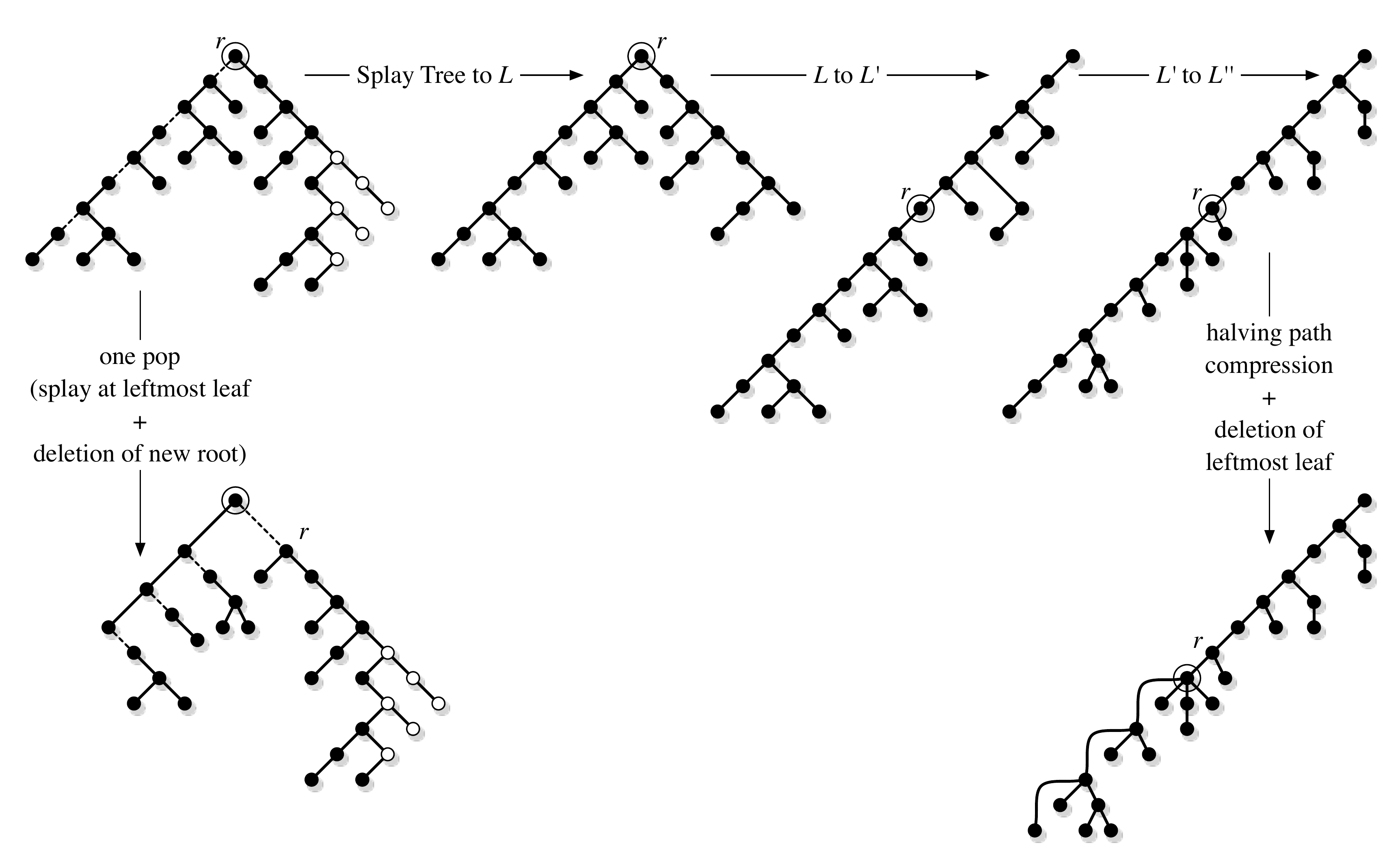}}
\end{center}
\caption{\label{fig:equiv}Top row: the transformation from the splay tree (left nodes black, right nodes right) to $L,L',$ and $L''$.
Bottom: the effect of a pop on the splay tree and $L''$.}
\end{figure}
Notice that if $v$ and its $L$-parent $p_L(v)$ lie on the left spine of $L$, and $p_L(v) \not= r$,
rotating the edge 
$(v,p_L(v))$ corresponds to making $v$ the leftmost child of its grandparent in $L''$.  If $p_L(v) = r$ rotating
$(v,p_L(v))$ makes $v$ the root of $L$ but does not change the structure of $L'$ or $L''$.  In an almost symmetric
way, if $v,p_L(v),$ and $p_L^{(2)}(v)\not= r$ lie on the right spine of $L$ then rotating $(v,p_L(v))$ in $L$
corresponds to rotating $(p_L^{(2)}(v), p_L(v))$ in $L'$ and making $p_L^{(2)}(v)$ the leftmost child of its grandparent ($v$) in $L''$.
Notice that $v = p_{L'}(p_L(v)) = p_{L'}^{(2)}(p_L^{(2)}(v))$.  Observation~\ref{obs:equiv} summarizes the relationship
between deque operations and path compressions; see also~\cite{Lucas91,Sundar92}.

\begin{observation}\label{obs:equiv}
A pop operation corresponds to a halving path compression in $L''$ that begins at the leftmost leaf and terminates at $r$,
followed by a deletion of the leftmost leaf and a possible {\em root relocation} from $r$ to its leftmost child.
A push operation causes a new leaf $v$ to be added as the leftmost child of $r$ in $L''$, followed by a root relocation from $r$ to $v$.
An eject operation corresponds to a halving path compression originating at $r$ and terminating at some ancestor (not necessarily the root of $L''$),
followed by a possible root relocation from $r$ to $p_{L''}(r)$.  An inject operation has no effect on $L''$.
\end{observation}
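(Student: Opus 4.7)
The plan is to verify each of the four claims by case analysis, applying the rotation-to-$L''$-operation translation rules already developed in the preceding paragraph. For the \textbf{pop} case, the leftmost leaf $\ell$ of the splay tree lies in $L$, and because $\ell$ is leftmost, every ancestor of $\ell$ in $L$ contains $\ell$ in its left subtree; consequently the entire splay path of $\ell$ runs along the left spine of $L$. Each zig-zig that rotates $(v, p_L(v))$ and then $(p_L(v), p_L^{(2)}(v))$ along this spine translates, by the first rule, into placing $v$ as the leftmost child of $p_L^{(2)}(v)$ in $L''$ -- exactly a two-level halving compression. Zig-zag cases and the terminating zig at $r$ are handled analogously, and the aggregate effect over the splay is exactly a halving path compression in $L''$ from $\ell$ to $r$. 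Deleting $\ell$ then removes the leftmost leaf of $L''$, and the shift of the root label from $r$ to its new leftmost child (if any) is the claimed root relocation.

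For the \textbf{eject} case, the rightmost leaf of the right half is splayed to the overall root. The portion of the splay path that lies below $r$ runs entirely inside the right half and therefore leaves $L$ (and $L''$) untouched. Once the accessed element is lifted above $r$, the subsequent rotations travel down the right spine of $L$, and by the second rule each such zig-zig or zig-zag maps to making $p_L^{(2)}(v)$ the leftmost child of $v$ in $L''$, producing a halving path compression starting at $r$ and ascending through its $L''$-ancestors. This compression need not reach the $L''$-root; it terminates when the accessed element becomes the splay-tree root, hence the caveat ``terminating at some ancestor''. The possible root relocation from $r$ to $p_{L''}(r)$ reflects the fact that $r$ may lose its $L$-parent in the final rotation. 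For \textbf{push}, a new node is prepended as the root of $L$; the composite transformation $L \to L' \to L''$ records this as a new leftmost leaf $v$ of $r$ in $L''$ together with a root relocation from $r$ to $v$. For \textbf{inject}, the new node is attached only to the right half, so $L$ and $L''$ are unaffected.

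The main obstacle is bookkeeping in the eject case: one must carefully justify the boundary where the splay path leaves the right half and enters the right spine of $L$, checking that a zig-zig or zig-zag straddling this border still corresponds cleanly to the intended operation on $L''$ and that no spurious operation on $L''$ is introduced. Once this transition and the analogous degenerate cases (final zig, interaction with root relocation) are pinned down, the observation is a mechanical consequence of accumulating the two translation rules over all rotations of the splay.
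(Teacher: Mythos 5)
Your overall plan --- a case analysis that accumulates the two rotation-translation rules from the preceding paragraph --- is exactly how the paper justifies this observation (it offers no separate proof beyond that discussion and the citations to Lucas and Sundar), and your treatment of pop, push and inject is essentially right. Two small slips there: when the leftmost leaf is splayed every node on its access path is a left child, so zig-zag steps never occur (your "handled analogously" is vacuous), and a zig-zig rotates the grandparent edge $(y,z)$ \emph{before} $(x,y)$, not after, as in the paper's Figure caption.

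The genuine problem is the eject case, and it is precisely at the point you yourself flag as the main obstacle. The splay path of the ejected element is the right spine of the whole splay tree, along which keys increase going downward; hence the left-half nodes on that path form a contiguous segment at the \emph{top}. If the splay-tree root is a right-half node there are no such nodes at all and $L$ is untouched; otherwise the root is $r$ itself, and the other participating left-half nodes --- an initial portion of the right spine of $L$ --- lie strictly \emph{below} $r$ on the splay path, above the right-half portion. So both of your structural claims fail: the portion of the path below $r$ is not "entirely inside the right half," and no rotations happen "once the accessed element is lifted above $r$" --- at that moment it is the root and the splay is over. The rotations that actually change $L$ are the zig-zig rotations of edges joining two consecutive left-half ancestors, performed while the accessed element is still below $r$; it is only after the transformation $L \to L' \to L''$ (rotating the right spine of $L$ across its root) that these nodes become ancestors of $r$, which is what makes the transcribed compression originate at $r$ and ascend, possibly stopping short of the root of $L''$ because only a top portion of the right spine of $L$ lies on the splay path. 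Likewise, $r$ has no $L$-parent to "lose in the final rotation": the possible root relocation to $p_{L''}(r)$ occurs when the zig-zig parity causes the edge between $r$ and its right-spine child in $L$ to be rotated, promoting that child to be the new root of the induced left tree. As written, executing your eject analysis would assign the $L''$-compression to the wrong part of the splay and would not produce the stated correspondence without redoing this bookkeeping.
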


It is clear that if the amortized cost per deque operation in a phase is $f(k)$ then the overall cost for $m$ deque operations
on an initial $n$-node splay tree is $O((m+n)f(m+n))$; see \cite{Tar85,Sundar92,Elmasry04b}.
Using Observation~\ref{obs:equiv} we can (and do) restrict our attention to bounding the total length of a sequence of halving path
compressions up the left spine of an arbitrary rooted tree.  However, we may still lapse into deque terminology.  The terms
``pop'' and ``push'', for instance, are basically equivalent to ``halving path compression'' and ``leaf addition.''

\paragraph{Related Work.}
Restricted forms of path compressions have been studied in a number of situations.
The most well known example is Hart and Sharir's result \cite{HS86} on the equivalence between
$(ababa)$-free Davenport-Schinzel sequences and {\em generalized} postordered path compressions;
both have maximum length $\Theta(n\alpha(n))$.  Loebl and \Nesetril{} \cite{LN97} and Lucas~\cite{Lucas90} independently proved
that standard postordered path compressions with the so-called {\em rising roots} condition take linear time.
Buchsbaum, Sundar, and Tarjan~\cite{B+95} generalized this result to what they called deque-ordered path compressions,
again assuming the rising roots condition.  Hart and Sharir~\cite{HS86} have conjectured that the rising roots condition 
is not essential and that standard postordered path compressions take linear time.
The path compressions corresponding to deque operations 
are similar to the special cases studied earlier.  Some differences are that the compressions are halving (not full),
and although the compressions are spinal, due to pushes and ejects they are not performed in postorder
and do not satisfy the rising roots condition.

\section{Notation}\label{sect:not}

We say two sequences are isomorphic if they are the same up to a renaming of symbols.
The relations $\sigma \subseqe \Sigma$ and $\sigma \subseq \Sigma$ hold, respectively,
when $\sigma$ is a subsequence of $\Sigma$ and when $\sigma$ is isomorphic to a subsequence
of $\Sigma$.  A subsequence of $\Sigma$ is any sequence derived by deleting symbols from $\Sigma$.
A sequece $\Sigma$ is called $c$-regular if any two occurrences of the same symbol appear at distance at least $c$.
We denote by $|\Sigma|$ and $\|\Sigma\|$ the length and alphabet size of $\Sigma$, respectively.
Following Klazar~\cite{Klazar02} we let $\Ex(\sigma,n)$ be the maximum length of $\sigma$-free sequences:

\begin{definition}
$\Ex(\sigma,n) = \max\{|\Sigma| \;\::\:\; \sigma\nsubseq \Sigma \;\mbox{ and }\; \|\Sigma\| = n \;\mbox{ and }\; \Sigma \mbox{ is $\|\sigma\|$-regular}\}$
\end{definition}

It is known \cite{Klazar92} that $\Ex(\sigma,n) = n\cdot 2^{\alpha(n)^{O(1)}}$, where the $O(1)$ depends on the length
and alphabet size of $\sigma$.  Nearly tight bounds on $\Ex(\sigma,n)$ are known \cite{ASS89} when
$\sigma$ is of the form $ababa\cdots$.
Here $\alpha(n)$ is the inverse-Ackermann function, which can be defined as follows.
If $f(n)$ is a strictly decreasing function on the positive integers 
$f^*(n) = \min\{i \;:\; f^{(i)}(n) \le 2\}$, where $f^{(i+1)}(n) = f(f^{(i)}(n))$ and $f^{(1)}(n) = f(n)$.
Define $\alpha(m,n) = \min\{i\ge 1 \;:\; \log^{\overbrace{**\cdots*}^{i-1}}(n) \le 2+\frac{m}{n}\}$
and $\alpha(n) = \alpha(n,n)$.  This definition of $\alpha$ differs from others in the literature \cite{Tar75,HS86}
by at most a small constant.

In this paper all trees are rooted and the children of any vertex assigned some left-to-right order.
The trees we deal with are occasionally restructured with {\em path compressions}. 
Let $p(u)$ be the parent of $u$ at some specific time.  If $C = (u_1,\ldots,u_k)$ is a path,
where $u_{i+1} = p(u_i)$, performing a {\em total} compression of $C$ means to reassign
$p(u_i) = u_k$, for all $1\le i \le k-2$.  A {\em halving} compression of $C$ sets
$p(u_i) = u_{i+2}$ for all odd $i\le k-2$.  If $k$ is even a halving compression may set
$p(u_{k-1}) = p(u_k)$.  A specific case of interest is when $u_k$ is the tree root:
setting $p(u_{k-1}) = p(u_k)$ causes $u_{k-1}$ to be a tree root as well.
We say that $C$ {\em originates} at $u_1$ and {\em terminates} at $u_k$.
A total/halving path compression that does {\em not} terminate at the tree root is {\em stunted}.
The {\em length} of a total/halving path compression is the number of vertices whose
parent pointers are altered.  We never consider compressions with zero length.

A postordering of a tree rooted at $v$, having
children $v_1,\ldots,v_k$ from left to right, is the concatenation of the postorderings of
the subtrees rooted at $v_1,\ldots,v_k$ followed by $v$.  The {\em spine} of a tree is the path from
its root to its leftmost leaf.  A path compression is spinal if it affects a subpath of the spine; it need
not include the leftmost leaf nor the tree root.  It is easy to see that a total/halving spinal compression
can be postorder preserving.  For a total compression $(u_1,\ldots,u_k)$ we just prepend $u_1,\ldots,u_{k-2}$
to the preexisting left-right ordering on the children of $u_k$.  For a halving compression we make $u_{i}$ the new leftmost child
of $u_{i+1}$ for all odd $i$.

In order to use a clean inductive argument we look at a restrictive type of instance called
a {\em spinal compression system}.
The initial structure consists of a single path containing a mixture of {\em essential} nodes and
and {\em fluff} nodes.  The tree is manipulated through halving spinal path compressions, 
leaf deletions, and spontaneous compressions.   Let us
go through each of these in turn.  Whenever a fluff node becomes a leaf it is automatically deleted. 
The leftmost essential node (always a leaf) may be deleted at any time.
We are mainly interested in the total length of the ``official'' path compressions,
which are always halving and spinal.  Spontaneous compressions are any postorder preserving path compressions,
the cost of which we need not account for.
Let $\Rec(n,f,m)$ be the maximum total length of the official compressions on an instance with 
$n$ essential nodes and $f$ fluff nodes, where at most $m$ of the compressions are stunted.
In Section~\ref{sect:pc} we derive a recursive expression bounding $\Rec(n,f,m)$ and in Section~\ref{sect:deque}
we relate $\Rec$ to the time required for deque operations.

\section{Recursive Bounds on Spinal Compression Systems}\label{sect:pc}

Consider an initial instance with $n$ essential nodes and $f$ fluff nodes.  We first divide up the path
into $n/B$ blocks, each containing $B$ essential nodes, where the bottom most node in each block
is essential.  The sequence of official compressions is partitioned into $n/B$ epochs, where the $j$th epoch
begins at the first compression when the leftmost leaf belongs to the $j$th block and ends 
at the beginning of the $(j+1)$th epoch.  Let $I_j$ be the set of nodes, not including those in the $j$th block, that are touched by some compression
in the $j$th epoch.  It follows that at the commencement of the $j$th epoch $I_j$ is a single path; see Figure~\ref{fig:Ij}.
\begin{figure}[h!]
\begin{center}
\scalebox{.4}{\includegraphics{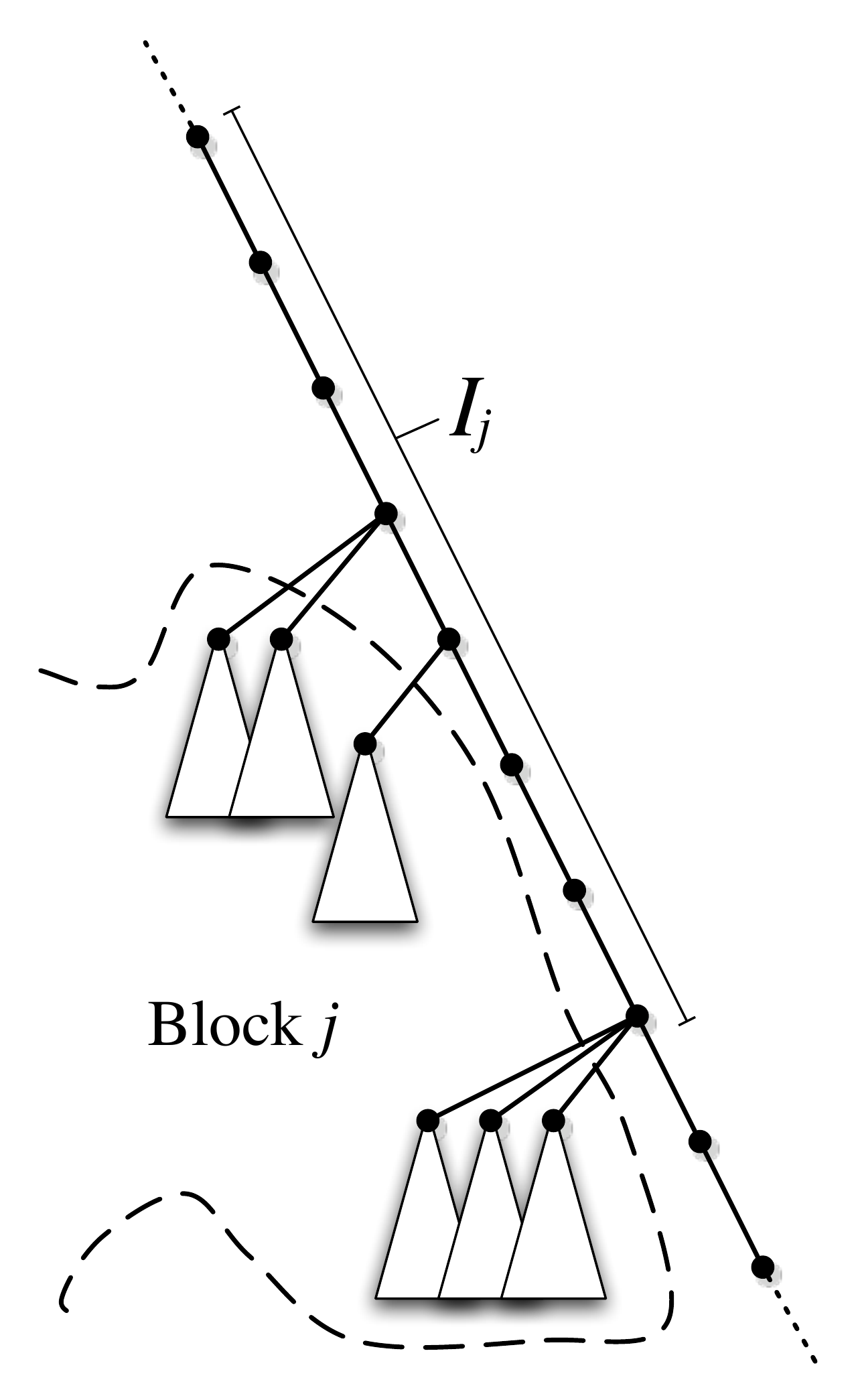}}
\end{center}
\caption{\label{fig:Ij}}
\end{figure}
At this time some members of $I_j$ may be {\em affiliated}
with previous epochs.  It is always the case that nodes affiliated with any $j' < j$ appear as a connected subpath of the spine.
We call an essential node $v$ {\em exposed} if it has no essential ancestor that either shares an affiliation with $v$ or lies
in the same block as $v$.
Let $\hat{I}_j\subseteq I_j$ be the set of essential exposed nodes.  
We call epoch $j$ sparse 
if $|\hat{I}_j|\log|\hat{I}_j| < |I_j|$ and dense otherwise.  
If epoch $j$ is dense we {\em affiliate} all nodes in $I_j$ with $j$.
We view $I_j$ as a separate spinal compression system, where $\hat{I}_j$ is the set of essential nodes and all others in $I_j\backslash \hat{I}_j$ 
are fluff.  Notice that a subpath of $I_j$ can be affiliated with $j$ and a previous epoch, say $j'$.  If a compression
influences nodes affiliated with both $j$ and $j'$ we charge the cost to the $j$th spinal compression system and call its effect
on the $j'$th compression system a spontaneous compression.
If the $j$th epoch is sparse we do not affiliate any nodes with $j$.

\begin{lemma}\label{lem:rec-a}
Let $f \ge \sum_j f_j$ and $m \ge \sum_j \mup_j + \sum_j \mdown_j$.
\[
\Rec(n,f,m) = \sum_{j=1}^{n/B} \Rec(B,f_j,\mdown_j) + \sum_{\mbox{epoch $j$ dense}} 3\Rec(|\hat{I}_j|, |\hat{I}_j|\log|\hat{I}_j|, \mup_j) + 2m + n + f
\]
\end{lemma}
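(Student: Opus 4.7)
The plan is to decompose the total length of the official compressions by distributing each compression's affected vertices among the three terms on the right-hand side. For each compression $C$ in epoch $j$, I would split the vertices $C$ touches into three groups: (i) those lying in block $B_j$; (ii) those lying in $I_j$ and either affiliated with epoch $j$ itself (which happens only if $j$ is dense) or with no epoch at all; and (iii) those lying in $I_j$ that are already affiliated with some earlier epoch $j'<j$ or that share their block with an essential ancestor. Group (i) feeds the block recursion, group (ii) feeds the dense-epoch recursion, and group (iii) generates spontaneous compressions in previously instantiated sub-systems, absorbed into those sub-systems' recursive bounds.

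For the first term, I would observe that the restriction of the compressions of epoch $j$ to block $B_j$ is itself a spinal compression system on the $B$ essential nodes of the block plus $f_j$ fluff (the original fluff lying in block $j$, chosen so that $\sum_j f_j \le f$). Any official compression that extends past $B_j$ looks, from $B_j$'s perspective, like a halving compression terminating at the topmost vertex of $B_j$ it touches; those that do not reach the top of $B_j$ are exactly the stunted ones counted by $\mdown_j$. This yields $\sum_{j} \Rec(B,f_j,\mdown_j)$.

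For the second term, at the start of each dense epoch $j$ I treat $I_j$ as a freshly instantiated spinal compression system whose essential nodes are $\hat{I}_j$ and whose fluff is $I_j\setminus\hat{I}_j$. Denseness gives $|I_j| \le |\hat{I}_j|\log|\hat{I}_j|$, so the fluff count is at most $|\hat{I}_j|\log|\hat{I}_j|$, matching the second argument of the recursive call. A single halving compression of the original system may cross the $B_j$/$I_j$ boundary in a way that is not a clean halving when restricted to $I_j$, so I expect it will need to be broken into at most three halving sub-compressions of the sub-system --- this is where the factor $3$ will come from. Stunted effects on $I_j$ that fail to reach its top correspond to stunted sub-system compressions and are counted by $\mup_j$.

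The additive term $2m + n + f$ absorbs the leftover cost. First, in a sparse epoch $j$, group-(iii) vertices already belong to some earlier sub-system and are charged there as spontaneous compressions; the only things a sparse epoch contributes directly are members of $\hat{I}_j$ (a small set, by sparseness) and a handful of fluff nodes about to be deleted, contributing to the $n + f$ portion. Second, each compression has $O(1)$ ``boundary'' vertices (at the top of the block portion or at the top of a stunted compression) that do not fit cleanly into any sub-system; the total number of compressions is at most $m$ stunted plus the number of non-stunted ones, and the latter is bounded by $n+f$ since every non-stunted halving compression is paired with a leaf deletion. The main obstacle I anticipate is verifying that each sub-system I extract is actually a valid spinal compression system in the sense of Section~\ref{sect:pc} --- that its induced official compressions remain halving, spinal, and postorder-preserving, and that the factor of $3$ is enough to cover every way a halving compression can be broken by a block or affiliation boundary.
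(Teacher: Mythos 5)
Your decomposition into a block part and an $I_j$ part matches the paper's first step, and your treatment of the first term (block recursion with $\mdown_j$ counting compressions stunted inside block $j$) is essentially right. But the heart of the lemma is the accounting for \emph{sparse} epochs, and there your mechanism breaks. You route the cost of vertices affiliated with earlier epochs into those earlier sub-systems ``as spontaneous compressions, absorbed into those sub-systems' recursive bounds'' --- but spontaneous compressions are by definition cost-free in a spinal compression system, so this absorbs nothing; the paper instead charges those pieces as compressions \emph{counted} by the earlier systems' $\Rec$ terms. Worse, in a sparse epoch a compression also passes through nodes of $I_j$ that belong to no system at all (they are unaffiliated, outside block $j$, and $j$ spawns no system because it is sparse). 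You claim these contribute only to the $n+f$ portion, arguing sparseness makes $\hat{I}_j$ small and the rest is fluff about to be deleted; that is false --- such nodes are neither deleted nor affiliated, so they can be touched again in many later sparse epochs, and their total cost is not bounded by $n+f$. The paper handles exactly this by splitting the sparse-epoch tail $C''$ into alternating pieces $C''_1,C''_2,\ldots$ where odd pieces meet existing systems (charged to those systems), unit-cost even pieces are charged to $m$ (one of the two $m$ terms), and the remaining even pieces are cascades whose total length per epoch is $|\hat{I}_j|\log|\hat{I}_j|$ by Sundar's argument --- this is what the doubling and tripling of the $\sum \Rec(|\hat{I}_{j'}|,\cdot,\cdot)$ sum, i.e.\ the factor $3$, actually pays for. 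Your guess that the $3$ comes from splitting a halving compression at the block boundary into three halving sub-compressions is not the role it plays and gives you no way to cover the sparse-epoch cost.

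Two smaller points: the paper's $n+f$ term pays for the single connecting edge $e$ of each \emph{unstunted} compression via a monotone quantity (each such compression increases by one the number of nodes whose grandparent is a root or nonexistent, which can happen at most $n+f$ times); your substitute claim that every unstunted compression is paired with a leaf deletion is not justified by the definition of a spinal compression system, where deletions are optional. And the stunted compressions' connecting edges account for the remaining $m$ term, so both $m$'s have specific jobs (stunted edges, and short sparse-epoch pieces), neither of which your write-up assigns.
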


\begin{proof}
Consider a compression $C$ occurring in the $j$th epoch.  
We can always break $C$ into three parts: (i) a compression $C'$ inside the $j$th block
followed by (ii) an edge $e$ leading from the $j$th block to some vertex in $I_j$, followed
by (iii) a compression $C''$ lying entirely within $I_j$.  Notice that if $C''$ is present it may be
either stunted or unstunted.  
However, if $C''$ is present then $C'$ (if present) must be an unstunted compression.
We call $C''$ {\em short} if it has unit cost, i.e., if only one node changes parent.
Whether $C'$ is stunted or not its cost is covered in the $\Rec(B,f_{j},\mdown_{j})$ term, 
where $f_{j}$ is the number of fluffy nodes and $\mdown_{j}$ the
number of compressions that are stunted with respect to block $j$.  
Similarly, if the $j$th epoch is dense the cost of $C''$, whether stunted or not, is covered in the term
$\Rec(|\hat{I}_j|, |\hat{I}_j|\log|\hat{I}_j|, \mup_j)$, where $\mup_j$ is the number of stunted compressions that terminate in 
the spinal compression system on $I_j$.  We account for the cost of $e$ in one of two ways.  If $C''$ (and therefore $C$)
is stunted we charge it to the compression; hence the $m$ term.  If not, notice that after every unstunted compression $C''$
the number of nodes whose grandparents are roots or nonexistent increases by at least one.  This can obviously happen
at most $n+f$ times.
Consider the compression $C''$ when the $j$th epoch is sparse.  If $C''$ is short we charge its cost to the compression; hence the other $m$ term.
In general, $C''$ will intersect one or more previously spawned spinal compression systems.  That is, it can be divided up
into pieces $C''_1, C''_2, C''_3,\ldots$, where for odd $i$, $C''_i$ intersects an existing compression system.
The costs of $C''_1, C''_3,\ldots$ are covered in the $\sum_{j'} \Rec(|\hat{I}_{j'}|, |\hat{I}_j|\log|\hat{I}_j|, \mup_{j'})$ terms.
By doubling this sum we can account for some of $C''_2,C''_4,\ldots$; at the very least this includes those with unit cost.
Over all of epoch $j$, the total length of the other mini-compressions of the form $C''_i$ ($i$ even)
is $|\hat{I}_{j}|\log|\hat{I}_j|$;  see Sundar~\cite{Sundar92}.\footnote{In Sundar's terminology each of these mini-compressions is a $k$-cascade for some $k\ge 2$.}
By tripling the sum $\sum_{j'} \Rec(|\hat{I}_{j'}|, |\hat{I}_{j'}|\log|\hat{I}_{j'}|, \mup_{j'})$ we account for the cost of these mini-compressions as well.
\end{proof}

Lemma~\ref{lem:rec-a} looks as though it may be effectively vacuous.
We are trying to bound the growth of $\Rec(n,f,m)$ and all Lemma~\ref{lem:rec-a} can say 
is that it depends on the magnitude of the $\{\hat{I}_j\}_j$ sets.  It does not even suggest a trivial
bound on their size.   Our strategy is to transcribe the $\{\hat{I}_j\}_j$ sets as a repetition-free
sequence $\Seq$ that avoids a specific forbidden subsequence.  It follows from the results of Agarwal et al.~\cite{ASS89}
that $|\Seq|$ is nearly linear in the size of its alphabet.
By choosing an appropriate block size $B$ we can apply Lemma~\ref{lem:rec-a} to obtain useful bounds on $\Rec$.

Our transcription method for $\Seq$
is very similar to the one used by Hart and Sharir \cite{HS86}.  
In Lemma~\ref{lem:abaabba} we show that $abaabba\nsubseq \Seq$ and $abababa\nsubseq \Seq$.
Using bounds on $\Ex(abababa,\cdot)$ from Agarwal, Sharir, and Shor~\cite{ASS89} we 
are able to show, in Lemmas~\ref{lem:rec-b} and \ref{lem:alphastar} that $\Rec(n,f,m)$ is $O((n+f+m)\alpha^*(n))$.

Before giving the actual transcription method we give a simpler one, point out why it 
isn't quite what we need, then adjust it.  The transcription is based on an evolving labeling
of the nodes.  A label is simply a sequence of block/epoch indices, listed in {\em descending} order.
Since nodes can belong to several spinal compression systems a node may keep several labels,
one for each system.
At the commencement of the $j$th epoch we
select out of $I_j$ a subset $\hat{I}_j$ with several properties, one of which is that
$\hat{I}_j$ has at most one node from any block.  If the $j$th epoch is dense
we prepend $j$ to the label of each node in $\hat{I}_j$.  The sequence $\Seq'$ consists
of the concatenation of all node labels, where the nodes are given in postorder.  An equivalent definition
is that $\Seq'$ begins empty; whenever the leftmost leaf is deleted we append its label to $\Seq'$ and continue.
Besides having the property that $abaabba, abababa\nsubseq \Seq'$, we need
$\Seq'$ to be repetition-free and contain not too many occurrences of any one symbol,
say, at most $t$ occurrences.  To enforce this, if $|\hat{I}_j| > t$ we simply split it up into $\ceil{|\hat{I}_j|/t}$ pieces
and treat each piece as a distinct spinal compression system.  Thus, the number of systems could 
exceed the number of blocks/epochs.  There may be repetitions in $\Seq'$ but not too many.  Note that the labels
of nodes appearing in any block share no symbols in common.  Therefore, $\Seq$ can only contain repetitions at block boundaries.
By removing at most $n/B - 1$ symbols from $\Seq'$ we can eliminate all repetitions.  
Using the above two observations, let $\Seq$ be a repetition-free sequence derived from $\Seq'$ in 
which no symbol occurs more than $t$ times.

\begin{lemma}\label{lem:abaabba}
For $\sigma\in\{abaabba,abababa\}$, $\sigma \nsubseq \Seq'$.
\end{lemma}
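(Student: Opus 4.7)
The plan is to assume for contradiction that some $\sigma \in \{abaabba, abababa\}$ embeds into $\Seq'$, with distinct symbols $a,b$ corresponding to distinct epoch indices. Only dense epochs prepend symbols to labels, so both epochs are dense. Because each node has at most one symbol contributed by any single epoch, the four $a$-occurrences in $\sigma$ must come from four distinct nodes $\alpha_1,\alpha_2,\alpha_3,\alpha_4 \in \hat I_a$, and the three $b$-occurrences from distinct nodes $\beta_1,\beta_2,\beta_3 \in \hat I_b$, indexed in the order their symbols appear in $\Seq'$.

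First I would establish the postorder. At the start of epoch $a$ the nodes $\hat I_a$ lie on the then-current spine; since postorder on a spine runs deepest-leaf to root, the $\alpha_i$'s sit on this spine from bottom to top. All operations (halving compressions, spontaneous compressions, leaf deletions) are postorder-preserving, so the relative order is invariant, as is the $\beta$-ordering. Thus the full postorder dictated by $\sigma$ is a hard constraint on the tree's geometry at any later time.

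Next I would assume without loss of generality that $a < b$ and argue survival. A node is emitted to $\Seq'$ exactly when it is deleted, and deletion order matches the invariant postorder. The $\beta_j$'s are alive at time $b$ since they belong to $\hat I_b$. The pattern inserts a $\beta_j$ between consecutive $\alpha_i$'s, so every $\alpha_i$ with $i \ge 2$ must outlive the interleaved $\beta_j$ and hence is also alive at time $b$. Now I would use the structural invariant already set up in the paper: because epoch $a$ is dense, every node of $I_a$ is affiliated with $a$, and at the start of epoch $b$ the $a$-affiliated nodes form a connected subpath of the spine. The surviving $\alpha_i$'s therefore occupy a single contiguous segment of the spine at time $b$, in their fixed bottom-to-top order.

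Finally I would derive the contradiction. Both forbidden patterns force at least one $\beta_j$ to lie on the spine strictly between two surviving $\alpha_i$'s — for instance $\beta_2$ and $\beta_3$ both sit between $\alpha_3$ and $\alpha_4$ in $abaabba$, and $\beta_2$ sits between $\alpha_2$ and $\alpha_3$ in $abababa$. By connectivity, any spine vertex between two $\alpha_i$'s lies in $I_a$ and so is affiliated with $a$, while the $\alpha_{i+1}$ above it is an essential ancestor also affiliated with $a$. Hence this $\beta_j$ is not exposed at time $b$, contradicting $\beta_j \in \hat I_b$. The symmetric case $a > b$ repeats the argument with the roles of $\alpha$ and $\beta$ reversed, using affiliation with $b$ at time $a$.

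The main obstacle is the delicate combinatorial invariant behind Step~3 — that the nodes of a dense epoch's $I_j$ remain a single connected subpath of the spine across the many intervening halving compressions, postorder-preserving spontaneous compressions, and leaf deletions, and that the \emph{exposed} predicate correctly witnesses this connectivity at later epochs. The rest of the proof is essentially a geometric consequence of this invariant; making the invariant precise (and verifying it is preserved by every operation of the spinal compression system) is where the real work sits.
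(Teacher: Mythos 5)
There is a genuine gap at the heart of your argument, in the step where you conclude that a $\beta_j$ lying (in $\Seq'$-order) between two surviving $\alpha_i$'s cannot be exposed at time $b$. Three inferences there are unsupported. First, nothing guarantees the surviving $\alpha_i$'s are on the spine at the commencement of epoch $b$ at all: the invariant the paper states is only that the $a$-affiliated nodes \emph{within the path $I_b$} form a connected interval, not that all live $a$-affiliated nodes sit contiguously on the current spine (halving compressions routinely evict affiliated nodes from the spine without deleting them). Hence ``$\beta_j$ lies on the spine strictly between two $\alpha_i$'s'' and ``$\alpha_{i+1}$ above it is an essential ancestor'' do not follow from postorder-betweenness. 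Second, even granting spine positions, ``any spine vertex between two $\alpha_i$'s lies in $I_a$'' is false in general: a node whose postorder falls between two members of $\hat{I}_a$ may have been hanging off the spine at the commencement of epoch $a$ (in a subtree attached between them), hence never touched during epoch $a$, never placed in $I_a$, and never affiliated with $a$; such a node can later rejoin the spine, enter $I_b$, and be exposed.

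Third, and most tellingly, your conclusion proves too much. If no node of $\hat{I}_b$ could ever appear between two surviving members of $\hat{I}_a$, then already $ababa \nsubseq \Seq'$, which would give an $O(n\alpha(n))$-type bound and make the paper's use of the longer patterns pointless. The construction does not support this: the definition of \emph{exposed} allows exactly one exposed node per common-affiliation class (the topmost one has no essential ancestor sharing its affiliation), so one $b$-labeled node genuinely can be interleaved among the $a$-labeled nodes. The paper's proof is built around this: it reduces both seven-letter patterns to the five-letter pattern $babba$ (with $b$ the later epoch, covering both orderings, so no separate ``symmetric case'' is needed), shows that at most \emph{one} node of $\hat{I}_b$ is affiliated with $a$ while all other $b$-labeled nodes fall in postorder strictly before or strictly after all $a$-labeled nodes, and then observes that $babba$ would require \emph{two} $b$'s interleaved between $a$'s. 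Your survival and postorder-preservation observations match the paper's setup, but the separation claim (at most one interleaved $b$) is the real content of the lemma, and your proposal replaces it with an inference chain that is both unjustified and, if it held, false-for-being-too-strong.
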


\begin{proof}
First note that if the lemma holds for $\Seq'$ it holds for $\Seq$ as well.
We show that for any $b>a$, $babba\nsubseqe \Seq'$. 
This would prove the lemma since, for any $\sigma' \in \{babba,abaab\}$ and $\sigma'' \in \{abaabba, abababa\}$, 
$\sigma'\subseqe \sigma''$.
Consider the commencement of the $b$th epoch.  If no nodes affiliated with $a$ appear on the path $I_b$
then all nodes labeled with $b$ occur, in postorder, strictly before or strictly after all nodes labeled with $a$; see the left part of 
Figure~\ref{fig:abaabba}.  
\begin{figure}[h!]
\begin{center}
\scalebox{.4}{\includegraphics{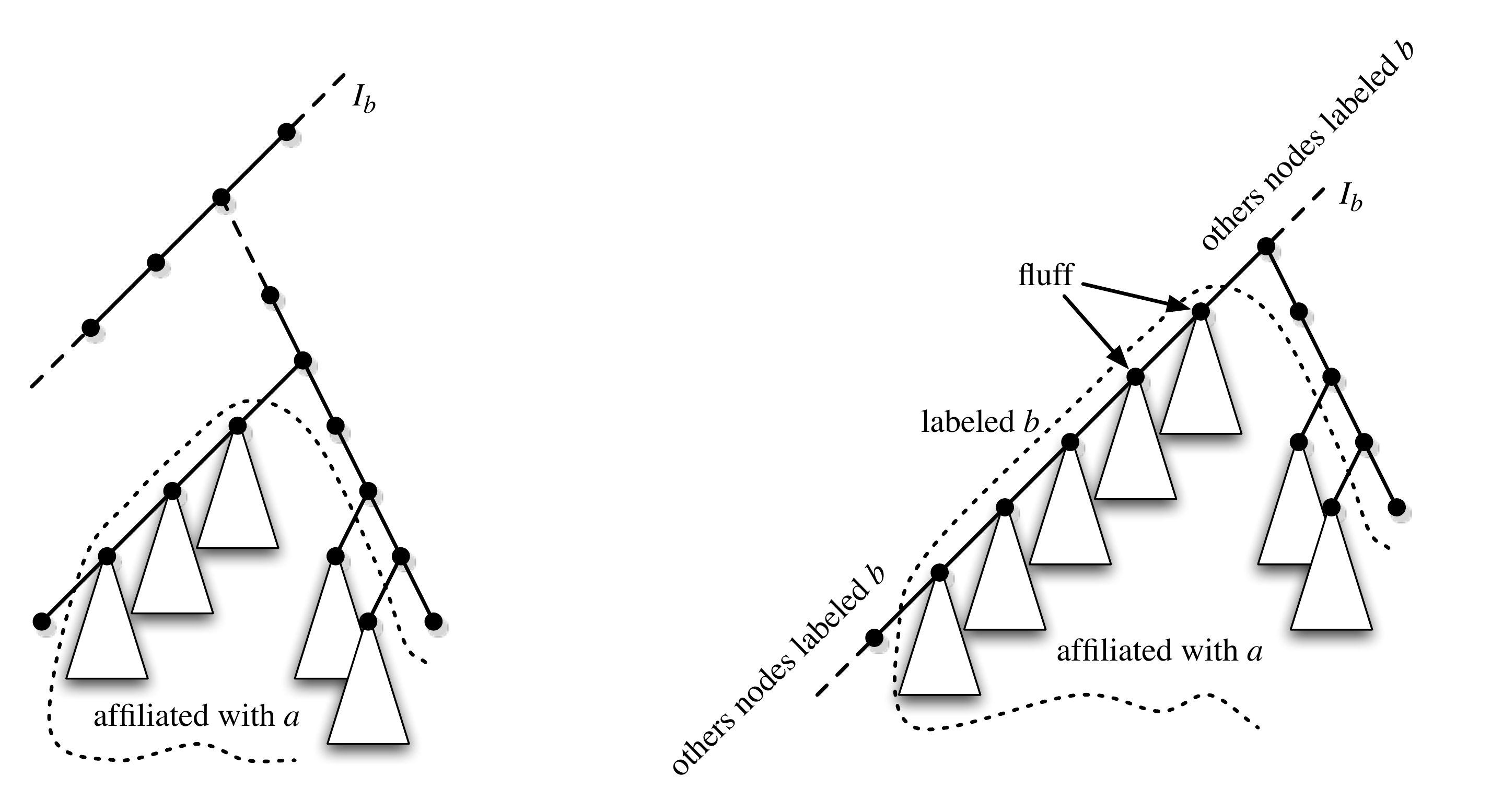}}
\end{center}
\caption{\label{fig:abaabba}}
\end{figure}
In this case
$baba \nsubseqe \Seq'$.  The more interesting case is when some interval of the nodes in $I_b$ are affiliated with $a$.
Recall that in our transcription method only exposed nodes were labeled and there could be at most one exposed node
in each set of nodes with a common affiliation.
Thus, only one node affiliated with $a$ can be labeled $b$.   All other nodes labeled
$b$ occur strictly before or strictly after all nodes labeled $a$.  Thus, after the appearance of $babb$ in $\Seq'$ all nodes
labeled $a$ would have been deleted.  We conclude that $babba \nsubseqe \Seq'$.
\end{proof}

Lemma~\ref{lem:rec-b} incorporates the recursive characterization of $\Rec$ from Lemma~\ref{lem:rec-a}
and the $(abababa)$-freeness of $\Seq$ established in Lemma~\ref{lem:abaabba}.

\begin{lemma}\label{lem:rec-b}
\[
\Rec(n,f,m) = \sum_{j=1}^{n/B} \Rec(B,f_j,\mdown_j) + 3\sum_{i=1}^{q} \Rec(l_i, l_i\log l_i, 2l_i\log^2 l_i) + 2m + n + f
\]
where 
\[
\begin{array}{l@{\hcm[.5]}l@{\hcm[.5]}l@{\hcm[.5]}l@{\hcm[.5]}l}
f \ge \sum_j f_j, &
m \ge \sum_j \mdown_j, &
l_i \le t, &
q \le \frac{\sum_i l_j}{t} + \frac{n}{B}, &
\sum_i l_i \le \Ex(abababa, q)
\end{array}
\]
\end{lemma}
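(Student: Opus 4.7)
The plan is to refine Lemma~\ref{lem:rec-a} by (a) splitting each dense $\hat I_j$ into chunks of size at most $t$ so that all recursive sub-instances are small, and (b) controlling the aggregate size $\sum_i l_i$ through the forbidden-subsequence property of $\Seq$.

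First, for every dense epoch $j$ I would cut the postorder sequence $\hat I_j$ into $\lceil|\hat I_j|/t\rceil$ consecutive sub-paths of size at most $t$, and treat each one as the essential node set of its own spinal compression system on the surrounding $I_j$. Writing $l_1,\ldots,l_q$ for the resulting sub-system sizes, one immediately has $l_i\le t$ and
\[
q \;\le\; \sum_i l_i/t + n/B,
\]
since there are at most $n/B$ dense epochs and each such epoch spawns at most $\lfloor|\hat I_j|/t\rfloor+1$ sub-systems. The density condition $|I_j\setminus\hat I_j|\le|\hat I_j|\log|\hat I_j|$ combined with the monotonicity of $\Rec$ in its fluff parameter lets me record the fluff of the $i$th sub-system as $l_i\log l_i$. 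The stunted-compression budget $\mup_i$ for the sub-system obeys $\mup_i\le 2l_i\log^2 l_i$ by the classical $O(N\log N)$ bound on the total cost of halving spinal compressions on an $N$-node tree: here $N=O(l_i\log l_i)$, and every stunted compression contributes at least one unit of cost.

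Second, I would invoke the transcription described just before Lemma~\ref{lem:abaabba}: prepending the symbol $i$ to the label of every exposed essential node of the $i$th sub-system and concatenating the labels of essential leaves in their order of deletion yields a sequence $\Seq'$ of length $\sum_i l_i$ in which every symbol appears at most $t$ times. Because the labels of distinct nodes inside a single block share no symbol, repetitions in $\Seq'$ can occur only at the fewer than $n/B$ block boundaries; deleting those occurrences produces a repetition-free (hence $2$-regular) sequence $\Seq$ with $|\Seq|\ge\sum_i l_i-n/B$. Lemma~\ref{lem:abaabba} asserts $abababa\nsubseq\Seq'$, and since $\Seq$ is a subsequence of $\Seq'$ the same holds of $\Seq$. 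The extremal bound of Agarwal, Sharir, and Shor~\cite{ASS89} then gives $|\Seq|\le\Ex(abababa,q)$, and the $n/B$ slack is absorbed into the additive $n$ on the right-hand side of the recurrence.

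The main obstacle is the bookkeeping for the stunted-compression budget: Lemma~\ref{lem:rec-a} deducts $\sum_j\mup_j$ from the global $m$, whereas Lemma~\ref{lem:rec-b} moves those stunted compressions inside the recursion through the parameter $2l_i\log^2 l_i$. One must verify that splitting a single $\hat I_j$ into several sub-systems does not undercount the global number of stunted compressions, and that the block-level budget $\mdown_j$ and the sub-system-level budget $\mup_i$ decompose disjointly; this follows from the $C=C'\cdot e\cdot C''$ decomposition used in the proof of Lemma~\ref{lem:rec-a}, since $C'$ is stunted only with respect to a block while $C''$ is stunted only with respect to a sub-system. Substituting the splitting, the fluff and stunted bounds, and the Davenport--Schinzel length bound into Lemma~\ref{lem:rec-a} then yields exactly the recurrence in the statement.
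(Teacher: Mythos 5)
Your overall route is the same as the paper's: split each dense exposed set $\hat I_j$ into pieces of size at most $t$ and treat each piece as its own spinal compression system (giving $l_i\le t$ and $q\le \sum_i l_i/t + n/B$), use the density condition to charge $I_j\setminus\hat I_j$ as fluff, and bound $\sum_i l_i$ via the label transcription and Lemma~\ref{lem:abaabba}. Two small remarks on that part: the inequality $\sum_i l_i\le\Ex(abababa,q)$ already follows from the definition of $\Ex$ once $\Seq$ is repetition-free (hence $2$-regular) and $abababa$-free over an alphabet of size at most $q$ --- the Agarwal--Sharir--Shor bound is only needed later, in Lemma~\ref{lem:alphastar}, to evaluate $\Ex$; and your explicit handling of the $O(n/B)$ repetition slack and of the disjoint decomposition of the budgets $\mdown_j$ versus $\mup_i$ via the $C=C'\cdot e\cdot C''$ split is, if anything, slightly more careful than the paper, which silently absorbs these.

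The genuine gap is your justification of $\mup_i\le 2l_i\log^2 l_i$. There is no ``classical $O(N\log N)$ bound on the total cost of halving spinal compressions on an $N$-node tree'': starting from a path one can walk the leftmost leaf all the way up with unit-length stunted halving spinal compressions, delete it, and repeat, for a total cost of $\Theta(N^2)$; the number of compressions is not tied to $N$, which is exactly why $\Rec$ carries the separate parameter $m$. (Indeed, if such a total-cost bound held, Lemmas~\ref{lem:rec-a}--\ref{lem:alphastar} would be superfluous, since $\Rec(n,f,m)=O((n+f)\log(n+f))$ would follow outright.) What the paper actually invokes is Sundar's counting bound \cite{Sundar92}: the number of stunted compressions of \emph{greater than unit} cost terminating in a system on $N=l_i+l_i\log l_i$ nodes is at most $N\log N\le 2l_i\log^2 l_i$ (these are his $k$-cascades with $k\ge 2$), while unit-cost stunted pieces are handled through the charging scheme and the $m$-terms of Lemma~\ref{lem:rec-a}. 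So the inequality you need is the right one, but the argument you give for it would fail; you need the cascade-counting bound for long stunted compressions, not a bound on total compression cost.
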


\begin{proof}
Here $\{l_i\}_i$ represent the sizes of the spawned spinal compression systems.  They would 
correspond with the epochs/blocks if we did not artificially break them up to guarantee that each $l_i \le t$.
It follows that the number of spawned systems is at most $q = (\sum_i l_i)/t + n/B$.  By Lemma~\ref{lem:abaabba}
we have $\sum_i l_i \le \Ex(abababa,q)$.  The term $\Rec(l_i, l_i\log l_i, 2l_i\log^2 l_i)$ reflects the cost of the $i$th spawned compression system,
with $l_i$ essential nodes and at most $l_i\log l_i$ fluff nodes.  By \cite{Sundar92} the number of stunted compressions (with greater than unit cost)
terminating in this system is at most $(l_i + l_i\log l_i)\log(l_i + l_i\log l_i) \le 2l_i\log^2 l_i$.
\end{proof}

\begin{lemma}\label{lem:alphastar}
$\Rec(n,f,m) = O((n+f+m)\alpha^*(n))$
\end{lemma}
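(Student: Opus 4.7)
The plan is to prove $\Rec(n,f,m) \le c(n+f+m)\alpha^*(n)$ by induction on $\alpha^*(n)$, for a sufficiently large constant $c$. For the base case I take any $n$ whose $\alpha^*(n)$ is bounded by a fixed threshold $k_0$; the target then reduces to $\Rec(n,f,m) = O(n+f+m)$, which follows from Sundar's $O((n+f+m)\alpha(n))$ bound \cite{Sundar92} since $\alpha(n)$ is a constant in this range.

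For the inductive step I want to choose the parameters $B$ and $t$ of Lemma~\ref{lem:rec-b} so that two conditions hold simultaneously: (i) $\alpha^*(B),\alpha^*(t) \le \alpha^*(n)-1$, so that the inductive hypothesis saves one $\alpha^*$ factor on every recursive call; and (ii) $B \ge C\log^2 t\cdot \alpha^*(t)\cdot \beta(n)$, where $\beta(q) := \Ex(abababa,q)/q = 2^{O(\alpha(q))}$ by Agarwal, Sharir, and Shor~\cite{ASS89}. A concrete working choice is $t = B \approx 2^{c'\alpha(n)}$ for a sufficiently large constant $c'$: using the identity $\alpha(2^x) = \alpha(x) + O(1)$ one checks that $\alpha^*(B)$ is $\alpha^*(n)-1$ (up to an $O(1)$ absorbable into the definition of $\alpha^*$), while $B = 2^{c'\alpha(n)}$ comfortably dominates $\log^2 B\cdot \alpha^*(B)\cdot 2^{O(\alpha(n))}$ once $c'$ is chosen large enough.

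Next I expand the recurrence of Lemma~\ref{lem:rec-b}. The outer sum is immediate from the inductive hypothesis together with $\sum_j f_j \le f$ and $\sum_j \mdown_j \le m$:
\[
\sum_{j=1}^{n/B}\Rec(B,f_j,\mdown_j) \;\le\; \sum_j c(B+f_j+\mdown_j)\alpha^*(B) \;\le\; c(n+f+m)(\alpha^*(n)-1).
\]
For the inner sum the inductive hypothesis yields $\Rec(l_i, l_i\log l_i, 2l_i\log^2 l_i) \le 4c\, l_i\log^2 l_i\cdot\alpha^*(l_i) \le 4c\,l_i\log^2 t\cdot\alpha^*(t)$. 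To control $\sum_i l_i$ I combine $\sum_i l_i \le \Ex(abababa,q) \le q\beta(q)$ with $q \le \sum_i l_i/t + n/B$; condition (ii) forces $t \ge 2\beta(q)$, so solving the resulting inequality gives $\sum_i l_i \le 2\beta(q)\,n/B$. Plugging back, the inner contribution is $O(\log^2 t\cdot\alpha^*(t)\cdot\beta(q)/B)\cdot n = O(n)$ by (ii). Adding the residual $2m+n+f$ and choosing $c$ large enough to swallow all additive constants closes the induction at $\Rec(n,f,m) \le c(n+f+m)\alpha^*(n)$.

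The main obstacle is producing a single choice of $B$ (and $t$) that \emph{both} lies strictly below the next $\alpha^*$-threshold, so that induction reduces $\alpha^*(n)$ by one, \emph{and} dominates the Davenport--Schinzel overhead $\log^2 B\cdot\alpha^*(B)\cdot 2^{O(\alpha(n))}$. These windows coexist precisely because consecutive $\alpha^*$-levels are separated by super-exponentially large gaps in $\alpha(n)$, comfortably swallowing any polylog and $2^{O(\alpha(n))}$ factors; this is the only place where the doubly-nested nature of Ackermann hierarchy is really used. A secondary subtlety is that each inner call has arguments of the combined size $\Theta(l_i\log^2 l_i)$ rather than $l_i$, but the $(n+f+m)$-form of the hypothesis absorbs the extra $\log^2 l_i$ directly into the additive $f+m$ terms, so no stray logarithm surfaces in the final bound.
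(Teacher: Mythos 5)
Your skeleton is essentially the paper's: apply Lemma~\ref{lem:rec-b} with $B=t$ chosen just above the Davenport--Schinzel density $\beta$, solve the self-referential inequality $\sum_i l_i \le (\sum_i l_i/t + n/B)\beta(\cdot)$ to force $\sum_i l_i = O(n/\beta(n))$, absorb the $\log^2 l_i$ blow-up into the $f$- and $m$-slots of the inductive hypothesis, and close an induction in which each recursion level costs $O(n+f+m)$ and decrements an $\alpha^*$-type counter. Two places in your instantiation, however, do not hold up as written. First, the claim $\beta(q)=\Ex(abababa,q)/q=2^{O(\alpha(q))}$ is stronger than what Agarwal--Sharir--Shor give for this (odd-order) case; the paper itself only asserts $\beta(z)=O(\alpha(z))^{\alpha(z)}=2^{O(\alpha(z)\log\alpha(z))}$. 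With that bound, no constant $c'$ makes $t=B=2^{c'\alpha(n)}$ dominate $\beta(n)$ for large $n$, so your condition (ii) (in particular $t\ge 2\beta(n)$) fails for the concrete parameters you chose, and the cancellation $\sum_i l_i\le 2\beta(n)n/B$ is lost. The repair is exactly the paper's move: set $B=t=\beta^2(n)$, which needs no explicit formula for $\beta$ at all, still yields $\sum_i l_i\le n/(\beta(n)-1)$, and keeps the inner contribution at $o(n)$ after the polylogarithmic factors.

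Second, your induction requires the exact decrement $\alpha^*(B)\le\alpha^*(n)-1$, and the ``$O(1)$ absorbable into the definition of $\alpha^*$'' is doing real work: as stated you only get $\alpha^*(B)\le\alpha^*(n)-1+O(1)$, which does not close an induction whose target is literally $c(n+f+m)\alpha^*(n)$. The paper avoids this fussiness by inducting on a bespoke function $\gamma(n)$, the number of iterations of $\beta^2$ needed to drop below a (large) constant threshold, and only afterwards showing $\beta^2(\beta^2(n))<\alpha(n)$, hence $\gamma(n)=O(\alpha^*(n))$. If you replace your counter by such a $\gamma$ (or by $\min\{i:\alpha^{(i)}(n)\le N_0\}$ for a suitably large constant $N_0$), the decrement is genuine and the rest of your arithmetic --- outer sum at most $c(n+f+m)(\gamma(n)-1)$, inner sum $O(n)$ by (ii), plus the residual $2m+n+f$ --- closes as you wrote it. A smaller caveat: your base case quotes Sundar as giving $O((n+f+m)\alpha(n))$ with the $\alpha$ depending on $n$ alone; his bound is in terms of the total instance size, so what you actually need at the bottom is $\Rec(O(1),f,m)=O(f+m+1)$ uniformly in $f$ and $m$ (the paper leaves this implicit as well), which deserves a sentence rather than a citation.
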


\begin{proof}
Let $\min\{\Ex(abaabba,z), \Ex(abababa, z)\} = z\cdot \beta(z)$.  
From the bounds established by Hart and Sharir \cite{HS86}
and Agarwal et al.~\cite{ASS89} we only know that $\beta(z) = \Omega(\alpha(z))$ and $O(\alpha(z))^{\alpha(z)}$.\footnote{Klazar's 
and Valtr \cite{KV94} claimed that $\Ex(abaabba,n) = \Theta(n\alpha(n))$ could be had by fiddling with Hart and Sharir's proof \cite{HS86}.
This claim was later retracted \cite{Klazar02}.}
We apply Lemma~\ref{lem:rec-b} with
$B = t = \beta^2(n)$:
\begin{eqnarray*}
\sum_i l_i &\le& \min\{\Ex(abababa, q), \Ex(abaabba,q)\}\\
               &\le&  \min\{\Ex(abababa, \frac{\sum_j l_j}{t} + \frac{n}{B}), \Ex(abaabba,\frac{\sum_j l_j}{t} + \frac{n}{B})\}\\
               &\le& (\frac{\sum_i l_i}{t} + \frac{n}{B})\beta(n)\\
               &\le& \frac{\sum_i l_i}{\beta(n)} + \frac{n}{\beta(n)}
\end{eqnarray*}

Thus $\sum_j l_j \le n/(\beta(n)-1)$.  Assume inductively that $\Rec(n,f,m) \le c(n+f+m)\gamma(n)$, where 
$\gamma(n)$ is the minimum $i$ such that $(\beta^2)^{(i)}(n)$ is less than some large enough constant.  (We need 
this constant threshold since $\beta^2(n)$ is not necessarily decreasing for very small $n$.)

\begin{eqnarray*}
\Rec(n,f,m) &\le& \sum_{j=1}^{n/B} \Rec(B,f_j,\mdown_j) + \sum_{i=1}^{q} 3\Rec(l_i, l_i\log l_i, 2l_i\log^2 l_i) + 2m + n + f\\
                 &\le& c(n+f+\mbox{$\sum_j \mdown_j$})\gamma(\beta(n)^2) + 3c\paren{\frac{n}{\beta(n)-1}(1+\log(\beta^2(n)) + 2\log^2(\beta^2(n)))}\gamma(\beta^2(n)) + 2m + n+f\\
                 &\le& c(n+f+\mbox{$\sum_j \mdown_j$})(\gamma(n)-1) + o(n) + 2m + n+f\\
                 &\le& c(n+f+m)\gamma(n) \hcm[1.5]\{\mbox{for $c$ sufficiently large}\}\\
\end{eqnarray*}
It is easy to see that $\beta^2(\beta^2(n)) < \alpha(n)$ for $n$ sufficiently large, from which it follows that
$\gamma(n) = O(\alpha^*(n))$.
\end{proof}

\section{Upper Bounds on Deque Operations}\label{sect:deque}

\begin{theorem}
Starting with an $n$-node splay tree, the time required for $m$ deque operations
is $O((m+n)\alpha^*(m+n))$.
\end{theorem}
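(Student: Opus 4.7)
The plan is to apply the phase decomposition recorded right after Observation~\ref{obs:equiv} and then invoke Lemma~\ref{lem:alphastar} on each phase. Phases are obtained by splitting the current tree in half and waiting until one half has been deleted; the amortization stated in the text reduces the overall bound to showing that a single phase of size $k \le m+n$ incurs $O(k\alpha^*(k))$ total work.

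Inside a phase I would focus, by symmetry, on the left half and apply Observation~\ref{obs:equiv} to translate the subsequence of pops, ejects, pushes, and injects into operations on the auxiliary tree $L''$: pops become halving spinal compressions up to $r$ followed by a leftmost-leaf deletion, ejects become stunted halving spinal compressions originating at $r$, pushes become leftmost-leaf additions, and injects do nothing. The running time of the original deque operations is, up to a constant factor per operation, the total length of the compressions so produced.

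I would then recast this as an instance of the spinal compression system of Section~\ref{sect:not}. A free postorder-preserving (``spontaneous'') compression converts the initial $L''$ into a single path. Treating the $\le k/2$ initial nodes together with the $\le k/2$ leaves introduced by pushes as essential, with no fluff nodes, gives at most $k$ essential nodes; the at most $k$ ejects account for at most $k$ stunted compressions. Applying Lemma~\ref{lem:alphastar} then yields
\[
\Rec(k,0,k) \;=\; O\!\left(k\,\alpha^*(k)\right),
\]
which, doubled for the right half and summed via the phase amortization, gives the desired $O((m+n)\alpha^*(m+n))$.

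The main obstacle is bookkeeping rather than analysis: pushes insert new leaves mid-sequence, whereas the spinal compression system has a fixed initial path and no node-insertion primitive. The cleanest workaround is to preallocate all future pushed nodes at the leftmost end of the initial path and regard each push in the original sequence as exposing the next preallocated node; one must then verify that these delayed activations, combined with the halving spinal compressions dictated by pops and ejects, still evolve the tree in a way consistent with some postorder-preserving schedule, so that Lemma~\ref{lem:alphastar} genuinely applies to the resulting trace.
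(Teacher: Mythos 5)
There is a genuine gap, and it sits exactly where you flag it: the claim that a whole phase can be recast as a \emph{single} spinal compression system. The system of Section~\ref{sect:not} has no insertion primitive and permits deletion only of the leftmost essential node, which must be a leaf. Under your preallocation scheme the not-yet-pushed nodes occupy the bottom of the initial path, so every node that is popped before the push stream is exhausted is \emph{not} the leftmost leaf of the simulated tree and cannot be deleted; if you instead retain popped nodes, the official compressions you must charge are no longer halving compressions of the true splay paths (and spontaneous compressions cannot help, since they only move nodes upward and cannot clear dead nodes off the ancestor path of the live leaf). Worse, by Observation~\ref{obs:equiv} a push attaches the new leaf as the leftmost child of the \emph{current designated root} $r$, which wanders up and down via root relocations driven by pops, pushes, and ejects; simulating that attachment from a preallocated bottom position by a free postorder-preserving compression would reparent the intermediate spine nodes in ways the real tree does not, destroying the correspondence between the system's official compression lengths and the actual splay cost. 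So the ``verification'' you defer in your last paragraph is the entire difficulty, and it is precisely the difficulty the paper's proof is engineered to avoid. (Your accounting of stunted compressions is also slightly off --- pops are stunted whenever $r$ is not the tree root, not just ejects --- though taking $m\le k$ still covers this.)

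The paper does not reduce a phase to one instance of $\Rec$. Instead it sets up a second recursion at the deque level: the left half of size $m'$ is cut into blocks of size $B=\beta(m')^2$; each block, together with the pushed nodes that later form new blocks, is a mini-deque problem charged recursively as $\RecDeque(B)$; and only the sets $J_j$ of nodes \emph{above} the active block --- which lie on a single path at the start of period $j$ and never receive insertions --- are modeled as spinal compression systems and charged via Lemma~\ref{lem:alphastar}. The total size of the exposed sets is then controlled by the Davenport--Schinzel transcription, $\sum_j|\hat{J}_j|\le \Ex(abababa,m'/B)\le m'/\beta(m')$, exactly as in Lemma~\ref{lem:rec-b}, and an induction parallel to Lemma~\ref{lem:alphastar} gives $\RecDeque(m')=O(m'\alpha^*(m'))$ before summing over phases. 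In short, pushes are confined to recursively handled blocks rather than grafted into a static path, and that structural choice --- absent from your proposal --- is what makes Lemma~\ref{lem:alphastar} legitimately applicable.
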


\begin{proof}
We reduce the deque problem to a set of spinal compression systems.  The main difference
between these systems and the deque problem as a whole is that spinal compression systems
do not allow general insertion/deletion of leaves and the initial tree is always a single path.

We impose a linear order on all nodes that
ever appear in the splay tree.  A pushed node (injected node) precedes (proceeds)
all nodes that are currently in or were in the splay tree in the past.  
As in the general reduction from Section~\ref{sect:equiv} we divide the access sequence into phases.
At the beginning of each phase the nodes are separated into equally sized left and right sets, say $m'$ in each set.
We partition the left set into blocks of size $B= \beta(m')^2$.
Let the $j$th period begin at the first pop of an element in the $j$th block and end
when all of the $j$th block has been popped.  Note that the pop that begins the $j$th period
could delete {\em any} element from the $j$th block, not necessarily the first.
Also note that periods are not necessarily disjoint.  For example, just after the $j$th period begins
we could push $B$ elements and then perform a pop.  This would have the effect of starting the
$j'$th period without ending the $j$th period, where $j'>j$.  Only one period is {\em active}, namely, the 
one whose block contains the leftmost leaf in the tree.  All other periods are {\em on hold}.

Let $J_j$ be the {\em set} of nodes that participate in a compression in the active part of the $j$th period,
excluding those that lie in block $j$.  Clearly the nodes in $J_j$ lie on one path when the $j$th period begins.
However, they are not necessarily contiguous.  For example, if the $j$th period is put on hold, compressions
from later periods could evict non-$J_j$ nodes from the path between two $J_j$ nodes.
If we put ourselves in one period's point of view we can assume without loss of generality that $J_j$ {\em does}
form a contiguous segment since, by definition, this period never ``sees'' any non-$J_j$ nodes that would expose 
its misbelief.

Let $\RecDeque(m')$ be the maximum cost of deque operations (i.e., the cost of push, pop, and eject operations that affect the left set)
when the total number of nodes is $m'$.
After the $j$th period begins the $j$th block functions as a mini-deque structure and its total cost at most $\RecDeque(B)$.
Until the $j$th period begins the block-$j$ nodes may participate in compressions, the total cost of which is $B$.
(Whenever a node is evicted from the left spine of the tree induced by block-$j$ nodes it can only be seen again
after the $j$th period begins.)
The cost of compressions inside the $\{J_j\}_j$ sets can be expressed in terms of the $\Rec$ function.
Again, at the commencement of the $j$th period we identify the exposed nodes $\hat{J}_j\subseteq J_j$.
We do nothing if it is sparse ($|\hat{J}_j|\log|\hat{J}_j| < |J_j|$) and if it is dense, we affiliate all nodes in $J_j$ with $j$ and spawn a separate
spinal compression system on $J_j$.
The proofs of Lemmas~\ref{lem:rec-a} and \ref{lem:rec-b} can easily be adapted to show that:
\[
\RecDeque(m') \le \frac{m'}{B}\RecDeque(B) + O(m' + \sum_{j=1}^{m'/B} \Rec(|\hat{J}_j|, |\hat{J}_j|\log|\hat{J}_j|, 2|\hat{J}_j|\log^2 |\hat{J}_j|))
\]
\noindent where $\sum_{j} |\hat{J}_j| \le \Ex(abababa, m'/B) = (m'/B)\beta(m'/B) \le m'/\beta(m')$.
A simple proof by induction (along the lines of Lemma~\ref{lem:alphastar}) shows that $\RecDeque(m') = O(m'\alpha^*(m'))$.
\end{proof}

\section{Conclusion}\label{sect:conclusion}

It's fair to say that previous analyses of splay trees could be characterized
as using potential functions~\cite{ST85,Georg04}, counting 
arguments~\cite{Tar85,Lucas91,Sundar92,Elmasry04b} or a complex synthesis 
of the two~\cite{ColeEtal00,Cole00}.  Although these techniques have been wildly successful
in proving (or nearly proving) the {\em corollaries} of dynamic optimality,
they have had no impact on the dynamic optimality question itself.  
The reason for this is probably the fact that so little is known about the behavior
of the optimal (offline) dynamic binary search tree; see \cite{Lucas88,BCK03}.
It is worth noting that the $O(\log\log n)$-competitiveness of tango trees and their
variants \cite{DHIP04,WDS06,Georg05} was established not by comparing them against
an optimal tree but one of Wilber's lower bounds on any binary search tree~\cite{Wilber89}.
(Contrast this technique with the more direct approach used by Sleator and Tarjan~\cite{ST85b}
in their proof of the dynamic optimality of Move-To-Front.)

The strategy taken in this paper is quite different from previous work and is clearly general
enough to be applied to other open problems concerning splay trees.
By transcribing the rotations performed by the splay tree into a Davenport-Schinzel sequence
avoiding $abaabba$ and $abababa$ we were able to use an off-the-shelf result of Agarwal et al.~\cite{ASS89}
and avoid an unmanageable bookkeeping problem.
One direction for future research is to study the relationship between splay operations,
various transcription methods, and various forbidden subsequences.  
Of particular interest are those forbidden subsequences whose extremal function is linear
since it is these that would be fit for finally settling the deque, split, and traversal conjectures.
See Adamec et al.~\cite{AKV92} and Klazar and Valtr~\cite{KV94}  for a large family of 
linear forbidden subsequences.

{\small
\bibliographystyle{plain}
\bibliography{../references}
}

\end{document}